\documentclass[11pt,twoside]{article}

\usepackage{mathtools} 
\usepackage{mathrsfs} 
\usepackage[symbol]{footmisc}

\newcommand{\hn}{\mathfrak{h}_{\mathrm{ph}}}

\newcommand{\FF}{\mathcal{F}}
\newcommand{\HH}{\mathscr H}

\newcommand{\HHel}{\mathscr H_{\mathrm{el}}}
\newcommand{\HHf}{\mathscr H_{\mathrm{f}}}

\newcommand{\Hf}{H_{\mathrm{f}}}
\newcommand{\Helf}{H_{\mathrm{el} + \mathrm{f}}}

\newcommand{\iu}{\mathrm{i}}   
\newcommand{\dd}{\mathrm{d}}   

\usepackage[utf8]{inputenc} 
\usepackage{amsmath,amssymb,amsthm}
\usepackage{pstricks,pst-node,pst-coil,pst-plot,pstricks-add}
\usepackage{geometry,epsfig}
\usepackage{bbm}
\usepackage{cite}

\usepackage{cite}
\usepackage{paralist}
\usepackage{cleveref}
\usepackage{enumitem}
\usepackage{emptypage}

\DeclareGraphicsExtensions{.pdf}

\usepackage{pgfplots}
\pgfplotsset{compat=1.18}
\usepackage{amsmath}

\newtheorem{lemma}{Lemma}
\newtheorem{theorem}[lemma]{Theorem}

\newcommand{\C}{\mathbb{C}} 
\newcommand{\R}{\mathbb{R}} 

\newcommand{\eps}{\varepsilon}

\title{\textbf{Self-Adjointness of the Standard Model of Non-Relativistic QED}}

\author{V.~Ku{\ss}maul\footnote{valentin.kussmaul@mathematik.uni-stuttgart.de}\\  
\small Fachbereich Mathematik, Universit\"at Stuttgart, D-70569 Stuttgart, Germany}  

\begin{document}
\maketitle

\begin{abstract}
For systems of non-relativistic charged particles minimally coupled to the soft modes of the quantized radiation field, we prove self-adjointness on the domain of the free Hamiltonian. This result is not new, but the proof we 
give shortens previous arguments. It is based on a comparison of graph norms and Nelson's commutator theorem. 
\end{abstract}

\section{Introduction} \label{introduction}
The \textit{Standard Model of Non-Relativistic Quantum Electrodynamics} \cite{BFS98}, also known as Pauli-Fierz Hamiltonian \cite{Hiroshima02,HH08}, describes the interaction of charged particles (e.g., electrons and protons) with the quantized electromagnetic field. The model is considered the standard theory in non-relativistic energy regimes. To ensure mathematical well-definedness, the particles are coupled to the low-energy modes of the field only. Physically, such a cutoff is legitimate for a theory on low-energy phenomena. 

The starting point for a rigorous analysis of any quantum system is the proof of (essential) self-adjointness. It is well known that the Pauli-Fierz Hamiltonian is self-adjoint 
on the domain of the free Hamiltonian and essentially self-adjoint on every core of the free Hamiltonian \cite{BFS99,Hiroshima02,HH08}. For sufficiently small values of the coupling constant, depending on the UV-cutoff, this follows from the Kato-Rellich theorem \cite{BFS99}. Self-adjointness for arbitrary coupling strengths and arbitrary UV-cutoff has so far been proven by implicitly defining a self-adjoint operator via a suitable mathematical object and then showing that the operator obtained this way agrees with the Pauli-Fierz Hamiltonian \cite{Hiroshima02,HH08}. The aforementioned mathematical object is a functional integral formula for the semigroup in \cite{Hiroshima02} and a closed quadratic form in \cite{HH08}.

The purpose of this paper is to provide a concise and direct proof of the results in \cite{Hiroshima02,HH08}. The strategy is as follows: For a symmetric operator, self-adjointness is equivalent to closedness and essential self-adjointness. The former follows from graph norm equivalence to the free operator, and the latter from Nelson's commutator theorem. The advantage is that the Pauli-Fierz Hamiltonian is analyzed directly, avoiding a detour via a second implicitly defined operator. 

This paper is organized as follows. In \Cref{sec:model-and-result}, we introduce the Pauli-Fierz Hamiltonian and state the main result, \Cref{main-result}. In  \Cref{sec:proof}, we prove the main result. The appendix collects technical estimates.

\section{Mathematical model and main result} \label{sec:model-and-result}
In this section, we briefly introduce the model and state the main result. More elaborate introductions can be found in \cite{BFS98,HH08}. For simplicity, we consider $N$ identical spin-$1/2$ particles (electrons) coupled to the radiation field, in the presence of static nuclei. We could also allow for a collection of particles with various charges.

The Hilbert space for the electrons is the anti-symmetric tensor product
\begin{align} \label{HHel}
	\HHel = \otimes_{\mathrm{asym}}^N (L^2(\R^3) \otimes \C^2).
\end{align}
The repulsion between electrons and the attraction exerted by atomic nuclei is described by the many-body potential 
\begin{align*}
	V(\mathbf{x}_1, \dots, \mathbf{x}_N) = \sum_{j = 1}^N v(\mathbf{x}_j) + \sum_{j < k} w(\mathbf{x}_j - \mathbf{x}_k)
\end{align*}
where $v, w : \R^3 \to \R$ are (sums of) Coulomb potentials or, more generally, functions in $L^2(\R^3) + L^\infty(\R^3)$ with $w(\mathbf{x}) = w(-\mathbf{x})$ for a.e. $\mathbf{x} \in \R^3$. 

The Hilbert space for the radiation field is the symmetric Fock space
\begin{align*}
	\HHf = \FF = \bigoplus_{n = 0}^\infty \otimes^n_\mathrm{sym} \hn
\end{align*}
where $\hn = L^2(\R^3 \times \{ 1, 2 \})$ describes a single photon with momentum $\mathbf{k} \in \R^3$ and polarization $\lambda \in \{1, 2\}$. In $\HHf$ let $a_\lambda(\mathbf{k})$ and $a_\lambda^*(\mathbf{k})$ denote the usual annihilation and creation operators, formally satisfying the canonical commutation relations (CCR), 
\begin{align*}
	[a_{\lambda_1}^{\#}(\mathbf{k}_1), a_{\lambda_2}^{\#}(\mathbf{k}_2)] = 0, \quad [a_{\lambda_1}(\mathbf{k}_1), a_{\lambda_2}^{*}(\mathbf{k}_2)] = \delta_{\lambda_1, \lambda_2} \, \delta(\mathbf{k}_1 - \mathbf{k}_2),
\end{align*}
where $a^{\#} = a$ or $a^*$. Let $|\mathbf{k}|$ be the energy of a photon with momentum $\mathbf{k}$ and 
\begin{align*}
	\Hf = \sum_{\lambda = 1,2}  \int \hspace{-1mm}\dd^3k \, |\mathbf{k}| a_\lambda^*(\mathbf{k}) a_\lambda(\mathbf{k}) 
\end{align*}
the field energy operator. The $j$th electron couples to the radiation field via the quantized vector potential
\begin{align*} 
	\mathbf{A}(\mathbf{x}_j) = \sum_{\lambda = 1,2} \int \hspace{-1mm} \frac{\dd^3k}{\sqrt{|\mathbf{k}|}}  \left[ e^{-\iu \mathbf{k}\cdot \mathbf{x}_j} \rho(\mathbf{k}) \boldsymbol{\eps}_\lambda(\mathbf{k}) a^*_\lambda(\mathbf{k})  + e^{\iu \mathbf{k}\cdot \mathbf{x}_j} \overline{\rho}(\mathbf{k})  \overline{\boldsymbol{\eps}}_\lambda(\mathbf{k}) a_\lambda(\mathbf{k}) \right] \quad (\mathbf{x}_j \in \R^3).
\end{align*}
Here, for a.e. $(\mathbf{k}, \lambda) \in \R^3 \times \{1, 2 \}$, the polarization vectors $\boldsymbol{\eps}_\lambda(\mathbf{k}) \in \C^3$ are measurable and chosen such that $\mathbf{k}/|\mathbf{k}|, \boldsymbol{\eps}_1(\mathbf{k}), \boldsymbol{\eps}_2(\mathbf{k})$ are orthonormal. The function $\rho: \R^3 \to \C$ is assumed to be measurable with
\begin{align} \label{UV-condition}
	\int \left(|\mathbf{k}| + \frac{1}{|\mathbf{k}|^2}\right) |\rho(\mathbf{k})|^2 \dd^3k < \infty.
\end{align}
It absorbs the coupling constant and implements a UV-cutoff.

In $\HH_{\mathrm{el} + \mathrm{f}} = \HHel \otimes \HHf$ the Hamiltonian of the total system is defined on the domain of the free Hamiltonian $H_0 = -\Delta + \Hf$ as the sum of operator compositions
\begin{align} \label{standard-model}
	\Helf = \sum_{j = 1}^N \left[(-\iu \boldsymbol{\nabla}_{\mathbf{x}_j} + \mathbf{A}(\mathbf{x}_j))^2 + \boldsymbol{\sigma}_j  \cdot \mathbf{B}(\mathbf{x}_j) \right] + V + \Hf
\end{align}
where $\mathbf{B}(\mathbf{x}_j) = \mathrm{rot} \, \mathbf{A}(\mathbf{x}_j)$ and $\boldsymbol{\sigma}_j$ denotes the triple of Pauli matrices acting on the $j$th factor of \eqref{HHel}. Condition \eqref{UV-condition} guarantees that $\Helf$ is well-defined on the domain of $H_0$. 
\begin{theorem}[\hspace{-0.1mm}\cite{Hiroshima02,HH08}] \label{main-result}
The Hamiltonian $H_{\mathrm{el} + \mathrm{f}}$ is self-adjoint and bounded from below. The graph norms of $H_{\mathrm{el} + \mathrm{f}}$ and $H_0$ are equivalent and $\Helf$ is essentially self-adjoint on any core of $H_0$.
\end{theorem}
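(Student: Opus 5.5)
We follow the strategy announced in the introduction: a symmetric operator is self-adjoint if and only if it is closed and essentially self-adjoint. Symmetry of $\Helf$ on $D(H_0)$ is a direct check, since each summand is formally symmetric and maps $D(H_0)$ into $\HH_{\mathrm{el}+\mathrm{f}}$. Closedness, together with the assertion about cores, will follow from equivalence of the graph norms of $\Helf$ and $H_0$. Essential self-adjointness on a core of $H_0$, and hence self-adjointness on $D(H_0)$, will follow from Nelson's commutator theorem with comparison operator $N := H_0 + C$, where $C$ is a suitable constant.

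\textbf{Step 1 (relative bound).} We first show $\|\Helf\psi\| \le c(\|H_0\psi\|+\|\psi\|)$ on $D(H_0)$. Expand $(-\iu\nabla_j+\mathbf A(\mathbf x_j))^2 = -\Delta_j + 2\mathbf A(\mathbf x_j)\cdot(-\iu\nabla_j) + \mathbf A(\mathbf x_j)^2$; this uses the Coulomb gauge, $\mathbf k\cdot\boldsymbol\eps_\lambda=0$, so that $\nabla\cdot\mathbf A=0$. Then apply the standard estimates
\[
\|a(f)\psi\|\le\|f/\sqrt{|\mathbf k|}\|\,\|\Hf^{1/2}\psi\|,
\]
\[
\|a^*(f)\psi\|\le\|f/\sqrt{|\mathbf k|}\|\,\|\Hf^{1/2}\psi\|+\|f\|\,\|\psi\|,
\]
together with their quadratic analogues bounding $\|\mathbf A^2\psi\|$ by $\|(\Hf+1)\psi\|$. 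Condition \eqref{UV-condition} guarantees that all the needed norms of $\rho$ are finite: $\int |\rho|^2|\mathbf k|^{-2}$, $\int|\rho|^2|\mathbf k|^{-1}$, and $\int|\rho|^2|\mathbf k|$ (the last for $\mathbf B$). The potential $V$ is infinitesimally $-\Delta$-bounded. Together these give the upper bound.

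\textbf{Step 2 (reverse bound).} This is the main obstacle: we need $\|H_0\psi\|\le c(\|\Helf\psi\|+\|\psi\|)$ for \emph{arbitrary} coupling, so no perturbation argument is available. The plan is to expand $\|\Helf\psi\|^2$, or better $\|(\Helf+C)\psi\|^2$, and use that the cross term between the nonnegative operators $T:=\sum_j(-\iu\nabla_j+\mathbf A_j)^2$ and $\Hf$ is nearly nonnegative. Concretely,
\[
2\,\mathrm{Re}\langle T\psi,\Hf\psi\rangle = 2\sum_j \|\Hf^{1/2}(-\iu\nabla_j+\mathbf A_j)\psi\|^2 + \text{commutator terms},
\]
where the commutators $[\Hf^{1/2}\text{ or }\Hf,\ -\iu\nabla+\mathbf A]$ involve fields with form factor $|\mathbf k|\rho/\sqrt{|\mathbf k|}$. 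These are controlled by $\|\Hf^{1/2}\psi\|$ and $\||\nabla|\psi\|$ times $\|\psi\|$ via \eqref{UV-condition}, hence are small relative to $\|T\psi\|^2+\|\Hf\psi\|^2$ up to constants. The resulting inequality, $\|T\psi\|^2+\|\Hf\psi\|^2\le 2\|(T+\Hf+C)\psi\|^2 + c\|\psi\|^2$, combines with $\|\Delta\psi\|\le c(\|T\psi\|+\|\Hf\psi\|+\|\psi\|)$, which follows from Step 1's estimates applied to $\mathbf A\cdot\nabla$ and $\mathbf A^2$ with an $\epsilon$-interpolation, to give $\|H_0\psi\|$ control. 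The spin term and $V$ are then absorbed as small perturbations of $-\Delta$ plus constants. It may be cleaner to run this argument on $C^\infty_0\otimes$ finite-particle vectors and extend by Step 1.

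\textbf{Step 3 (Nelson and conclusion).} Graph norm equivalence shows that $\Helf$ is closed on $D(H_0)$ and that every core of $H_0$ is a core of $\Helf$. For Nelson's theorem with $N=H_0+C\ge1$, we need $\|\Helf\psi\|\le c\|N\psi\|$, which is Step 1, and the commutator form bound $|\langle\Helf\psi,N\psi\rangle-\langle N\psi,\Helf\psi\rangle|\le c\|N^{1/2}\psi\|^2$. For the latter, $[\Hf,\mathbf A]$ is a field with form factor $\sqrt{|\mathbf k|}\rho$, and $[-\Delta,\mathbf A]$ produces $\nabla\cdot$-type fields with form factor $|\mathbf k|^{1/2}\rho$ times $\nabla$. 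Both are bounded as forms by $H_0+1$ using $\int(1+|\mathbf k|)|\rho|^2<\infty$. The $V$-commutator $[-\Delta,V]$ is handled by writing $|\langle V\psi,\Delta\psi\rangle - \langle\Delta\psi,V\psi\rangle|$ as an imaginary part, which is problematic for Coulomb singularities. To avoid it, apply Nelson to $\Helf - V$ (Steps 1–2 without $V$), and then add $V$ by Kato–Rellich, since $V$ is infinitesimally $H_0$-bounded and therefore, by Step 2, infinitesimally $(\Helf-V)$-bounded. Lower boundedness follows from $T\ge0$, $\Hf\ge0$, the bound $\pm\boldsymbol\sigma\cdot\mathbf B\le\epsilon\Hf+c_\epsilon$, and $V\ge -\epsilon(-\Delta)-c$, which can be transferred to $T$ by diamagnetic-type relative form bounds.
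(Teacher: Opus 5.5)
\textbf{There is a genuine gap in Step 3.} Your Steps 1 and 2 are sound. Your $\epsilon$-interpolation $\|\mathbf A\cdot\mathbf p\psi\|\le\epsilon\|\mathbf p^2\psi\|+C_\epsilon\|(\Hf+1)\psi\|$, which uses $[\mathbf p,\Hf]=0$, is even slightly quicker than the paper's Step 1. But Nelson's commutator condition fails for the comparison operator $N=H_0+C$.

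You checked that $[-\Delta,\mathbf A]$ is first order, but in the Hamiltonian it is always multiplied by a further $\mathbf p$ or $\mathbf A$. With $H=(\mathbf p+\mathbf A)^2+\Hf$:
\begin{itemize}
\item $[H,\mathbf p^2]$ contains $2[\mathbf A,\mathbf p^2]\cdot\mathbf p$, i.e.\ terms $(\partial_{x_l}A_i)\,p_lp_i$ (one field, two momenta).
\item It also contains $[\mathbf A^2,\mathbf p^2]=\mathbf A\cdot[\mathbf A,\mathbf p^2]+[\mathbf A,\mathbf p^2]\cdot\mathbf A$ (two fields, one momentum).
\end{itemize}
Counting $\mathbf p$ and $\mathbf A$ as order $H_0^{1/2}$, both terms are of order $H_0^{3/2}$. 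That is half a power too much for a bound by $\|N^{1/2}\psi\|^2$.

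Concretely, take $\psi=\phi\otimes\eta$, with $\eta$ a coherent state of amplitude $\alpha$ and $\phi$ a localized wave packet of momentum $\mathbf P$. Then $\langle\psi,\iu[H,H_0]\psi\rangle\approx c\,\alpha|\mathbf P|^2+c'\alpha^2|\mathbf P|$, with $c\neq0$ generically, while $\langle\psi,N\psi\rangle\sim|\mathbf P|^2+\alpha^2$. Taking $\alpha=\delta|\mathbf P|\to\infty$ with small fixed $\delta$ rules out any such bound.

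You also keep $\boldsymbol\sigma\cdot\mathbf B$ in the Nelson step. Its commutators with $H_0$ are fields with form factors of size $|\mathbf k|^{3/2}|\rho|$ (times $\mathbf p$ for the $-\Delta$ part), and these are not controlled by \eqref{UV-condition}. This is why the paper applies Nelson only after removing both $\boldsymbol\sigma\cdot\mathbf B$ and $V$.

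\textbf{The missing idea is the choice of comparison operator.} The paper takes $N_\lambda=(\mathbf p+\mathbf A)^2+\lambda\Hf+1$. For $\lambda\gg1$ it is self-adjoint by Kato--Rellich relative to $\mathbf p^2+\lambda\Hf+1$, with relative bound $O(\lambda^{-1/2})$. Because its second-order part is the kinetic term of $H$ itself, the commutator reduces to $[H,N_\lambda]=(\lambda-1)\,((\mathbf p+\mathbf A)\cdot[\mathbf A,\Hf]+[\mathbf A,\Hf]\cdot(\mathbf p+\mathbf A))$. This is form bounded by $N_\lambda$, since $[\mathbf A,\Hf]$ is $(\Hf+1)^{1/2}$-bounded.

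Afterwards, $\boldsymbol\sigma\cdot\mathbf B+V$ is added by Kato--Rellich. It is infinitesimally $H_0$-bounded, hence by your Step 2 infinitesimally $H$-bounded. Kato--Rellich also gives lower boundedness directly, so your diamagnetic-type argument is unnecessary.

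Alternatively, you could bypass Nelson altogether. Your Step 2 constants stay uniform when $\rho$ is replaced by $s\rho$ with $s\in[0,1]$. With $H(s)=(\mathbf p+s\mathbf A)^2+\Hf$, the difference $H(s')-H(s)$ is therefore $H(s)$-bounded with relative bound $O(|s'-s|)$. Finitely many Kato--Rellich steps then carry self-adjointness on $D(H_0)$ from $H(0)=H_0$ to $H(1)=H$.
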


\noindent
\emph{Remarks.} 
\begin{enumerate}
\item
It suffices to prove the theorem neglecting the terms $\boldsymbol{\sigma}_j  \cdot \mathbf{B}(\mathbf{x}_j)$ and $V$, since these can be added later using the Kato-Rellich theorem. Self-adjointness of $\Helf$ is then shown by establishing essential self-adjointness and closedness. Essential self-adjointness follows from Nelson's commutator theorem, which is applicable only if magnetic field and potential are neglected \cite{KMS13}. Closedness follows from graph norm equivalence of $\Helf$ and $H_0$, which is shown using estimates from \cite{HH08}. 
\item
In \cite{HH08} a self-adjoint realization of \eqref{standard-model}, called $T_\mathbf{A}$, is defined via a semibounded and closed quadratic form. Self-adjointness of $\Helf$ is shown by proving $\Helf = T_\mathbf{A}$. The easy part is $\Helf \subset T_\mathbf{A}$ \cite[Lemma 11 (b - c)]{HH08}. Since $D(\Helf) = D(H_0)$, this entails the inclusion $D(H_0)\subset D(T_\mathbf{A})$, which, by the closed graph theorem, implies that $T_\mathbf{A}$ is $H_0$-bounded. This relative bound, combined with an approximation argument that shows that $D(H_0)$ is a core for $T_\mathbf{A}$ \cite[Lemma 11 (d)]{HH08}, implies that any core for $H_0$ is a core for $T_\mathbf{A}$. Therefore, to prove $D(H_0) \supset D(T_\mathbf{A})$ (and hence $\Helf = T_\mathbf{A}$), it suffices to show that $H_0$ is $T_\mathbf{A}$-bounded on a core for $H_0$, which is done by explicit computation in Lemma 12 of \cite{HH08}. In \cite{Hiroshima02} a representation formula for the semigroup of \eqref{standard-model} is used to define a self-adjoint operator equal to $\Helf$.
\item 
In an upcoming publication we prove \emph{essential} self-adjointness of $\Helf$ in more general situations where the potential $V$ is singular and possibly not relatively bounded with respect to $\Delta$ \cite{GK26}. 
\end{enumerate}

\section{Proof of \Cref{main-result}} \label{sec:proof}
Let $\mathbf{p} = (-\iu \boldsymbol{\nabla}_{{\mathbf{x}}_1}, \dots, -\iu\boldsymbol{\nabla}_{{\mathbf{x}}_N})$ and $\mathbf{A} = (\mathbf{A}(\mathbf{x}_1), \dots, \mathbf{A}(\mathbf{x}_N))$. 
With a similar notation for magnetic field and Pauli matrices, the Hamiltonian \eqref{standard-model} simply reads
\begin{align*}
	H_{\mathrm{el} + \mathrm{f}} = (\mathbf{p} + \mathbf{A})^2 + \boldsymbol{\sigma} \cdot \mathbf{B} + V + \Hf.
\end{align*}
We first analyze
\begin{align*} 
	H = (\mathbf{p} + \mathbf{A})^2 + \Hf.
\end{align*}
Let $H_0 = \mathbf{p}^2 + \Hf$ be the free Hamiltonian. On the domain of $H_0$ we have that
\begin{align}
	H &= \mathbf{p}^2 + \mathbf{p} \cdot \mathbf{A}+ \mathbf{A} \cdot \mathbf{p} + \mathbf{A}^2 + \Hf \nonumber \\ 
	&=  H_0+ 2 \mathbf{A} \cdot \mathbf{p} + \mathbf{A}^2 \label{decomposed}
\end{align}
where we used that $\mathbf{k} \perp \boldsymbol{\eps}_\lambda(\mathbf{k})$ implies Coulomb gauge $(\mathbf{p} \cdot \mathbf{A}) = 0$. From $\eqref{decomposed}$ $H$ is easily seen to be $H_0$-bounded. Indeed, by \Cref{standard-bounds}, $\mathbf{A}^2$ is $\Hf$- hence $H_0$-bounded. $\mathbf{A}$ is $(\Hf + 1)^{1/2}$-bounded, so $\mathbf{A} \cdot \mathbf{p}$ is $(\Hf + 1)^{1/2} \mathbf{p}$- hence $H_0$-bounded. The following lemma proves the reverse bound:
\begin{lemma}[\hspace{-0.1mm}\cite{HH08}] \label{graph-norm-equivalence}
On the domain of $H_0$ the graph norms of $H$ and $H_0$ are equivalent. 
\end{lemma}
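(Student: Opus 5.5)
The plan is to show $\|H_0\psi\| \le C(\|H\psi\| + \|\psi\|)$ for $\psi$ in a convenient core $\mathcal{D}$ of $H_0$. The reverse bound was already noted after \eqref{decomposed}. For $\mathcal{D}$ I take finite linear combinations of vectors $f\otimes \Phi$, with $f$ a Schwartz function (antisymmetrized) and $\Phi$ a finite-particle vector with nice compactly supported wave functions. Since $H$ is $H_0$-bounded, the inequality then extends from $\mathcal{D}$ to all of $D(H_0)$ by approximation in the $H_0$-graph norm. The argument has two stages: first control $\|(\mathbf{p}+\mathbf{A})^2\psi\|$ and $\|\Hf\psi\|$ separately by $\|H\psi\|+\|\psi\|$, then recover $\|\mathbf{p}^2\psi\|$ from \eqref{decomposed}.

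\emph{Stage 1.} I expand
\[
\|H\psi\|^2 = \|(\mathbf{p}+\mathbf{A})^2\psi\|^2 + \|\Hf\psi\|^2 + 2\,\mathrm{Re}\,\langle (\mathbf{p}+\mathbf{A})^2\psi, \Hf\psi\rangle .
\]
Writing $\boldsymbol{\Pi} = \mathbf{p}+\mathbf{A}$, I move one $\boldsymbol{\Pi}$ across and commute it with $\Hf$:
\[
\mathrm{Re}\,\langle \boldsymbol{\Pi}^2\psi,\Hf\psi\rangle = \langle \boldsymbol{\Pi}\psi, \Hf \boldsymbol{\Pi}\psi\rangle + \mathrm{Re}\,\langle \boldsymbol{\Pi}\psi, [\boldsymbol{\Pi},\Hf]\psi\rangle .
\]
The first term is nonnegative and is dropped. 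For the commutator, $[\mathbf{p},\Hf]=0$, and $[\mathbf{A}(\mathbf{x}_j),\Hf]$ is again a field operator, with $\rho$ replaced by $\pm|\mathbf{k}|\rho$. In terms of the form factor this is $\sqrt{|\mathbf{k}|}\rho$ instead of $\rho/\sqrt{|\mathbf{k}|}$, and it is in $L^2$ precisely by the $|\mathbf{k}||\rho|^2$ part of \eqref{UV-condition}. By the standard creation/annihilation bounds (\Cref{standard-bounds}), it is therefore $(\Hf+1)^{1/2}$-bounded. Hence the commutator term is bounded by $C\|\boldsymbol{\Pi}\psi\|\,\|(\Hf+1)^{1/2}\psi\|$.

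Next I use $\|\boldsymbol{\Pi}\psi\|^2 \le \langle\psi,H\psi\rangle \le \|H\psi\|\|\psi\|$ and $\|(\Hf+1)^{1/2}\psi\|^2 \le \|\Hf\psi\|\|\psi\|+\|\psi\|^2$. With Young's inequality, the commutator term is then at most $\tfrac12\|\Hf\psi\|^2 + C(\|H\psi\|^2+\|\psi\|^2)$. Absorbing the $\tfrac12\|\Hf\psi\|^2$ gives
\[
\|\boldsymbol{\Pi}^2\psi\|^2 + \|\Hf\psi\|^2 \le C(\|H\psi\|^2+\|\psi\|^2).
\]

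\emph{Stage 2.} From \eqref{decomposed}, $\mathbf{p}^2 = \boldsymbol{\Pi}^2 - 2\mathbf{A}\cdot\mathbf{p} - \mathbf{A}^2$. The term $\|\mathbf{A}^2\psi\|$ is bounded by $C\|(\Hf+1)\psi\|$ via \Cref{standard-bounds}. For the cross term, $\|\mathbf{A}\cdot\mathbf{p}\psi\| \le C\|(\Hf+1)^{1/2}|\mathbf{p}|\psi\|$. Since $\mathbf{p}^2$ and $\Hf$ commute and are nonnegative,
\[
\langle |\mathbf{p}|\psi,\Hf|\mathbf{p}|\psi\rangle = \langle \mathbf{p}^2\psi,\Hf\psi\rangle \le \varepsilon\|\mathbf{p}^2\psi\|^2 + C_\varepsilon\|\Hf\psi\|^2,
\]
and similarly $\|\mathbf{p}\psi\|^2\le \varepsilon\|\mathbf{p}^2\psi\|^2 + C_\varepsilon\|\psi\|^2$. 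Choosing $\varepsilon$ small, I absorb $\|\mathbf{p}^2\psi\|$ into the left side and obtain $\|\mathbf{p}^2\psi\| \le C(\|\boldsymbol{\Pi}^2\psi\| + \|\Hf\psi\| + \|\psi\|)$. Combined with Stage 1, this gives the claim on $\mathcal{D}$.

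\emph{Main obstacle.} The delicate point is Stage 1, specifically justifying the commutator computation. The operators $\boldsymbol{\Pi}$ and $\Hf$ must map $\mathcal{D}$ into the domains where the manipulations are legitimate: on finite-particle, Schwartz-class vectors $\Hf\boldsymbol{\Pi}\psi$ makes sense, and the formal CCR identity $[\Hf,a^*(g)]=a^*(|\mathbf{k}|g)$ is valid there. It is also at this step that the $|\mathbf{k}||\rho|^2$ integrability in \eqref{UV-condition} is genuinely needed. Once this is settled, the extension from $\mathcal{D}$ to $D(H_0)$ is routine: take $\psi_n\to\psi$ in the $H_0$-graph norm; then $H\psi_n\to H\psi$ by $H_0$-boundedness of $H$, and the inequality passes to the limit.
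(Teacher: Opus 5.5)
Your proof is correct and follows the paper's overall plan, with the two steps in reverse order: your Stage~1 is the paper's Step~2 and vice versa. The plan is to expand $\|H\psi\|^2$, drop $\langle\boldsymbol{\Pi}\psi,\Hf\boldsymbol{\Pi}\psi\rangle\ge0$, and then recover $\mathbf{p}^2$ from $\boldsymbol{\Pi}^2$ and $\Hf$. The execution differs in both steps.

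\emph{Controlling $\boldsymbol{\Pi}^2$ and $\Hf$.} Your remainder $2\,\mathrm{Re}\langle\boldsymbol{\Pi}\psi,[\boldsymbol{\Pi},\Hf]\psi\rangle$ is exactly $\langle\psi,[\boldsymbol{\Pi},[\boldsymbol{\Pi},\Hf]]\psi\rangle$. The paper evaluates this double commutator exactly: Coulomb gauge removes the $\mathbf{p}$-part, and the CCR make the $\mathbf{A}$-part a c-number. The result is $-4N\int|\rho(\mathbf{k})|^2\,\dd^3k\,\|\psi\|^2$, which gives $\|\boldsymbol{\Pi}^2\psi\|^2+\|\Hf\psi\|^2\le\|H\psi\|^2+4N\int|\rho(\mathbf{k})|^2\,\dd^3k\,\|\psi\|^2$ with an explicit constant and no absorption. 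Your Cauchy--Schwarz/Young bound, using $\|\boldsymbol{\Pi}\psi\|^2\le\|H\psi\|\,\|\psi\|$, gives worse constants. In exchange it uses only the $(\Hf+1)^{1/2}$-boundedness of $[\mathbf{A},\Hf]$, so it does not rely on the double commutator being a c-number.

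\emph{Recovering $\mathbf{p}^2$.} The paper writes $\mathbf{p}^2=\boldsymbol{\Pi}^2-2\mathbf{A}\cdot\boldsymbol{\Pi}+\mathbf{A}^2$. It then needs $\boldsymbol{\Pi}(\Hf+1)\boldsymbol{\Pi}\le C((\boldsymbol{\Pi}^2)^2+\Hf^2+1)$, which again involves $[\Hf,\mathbf{A}]$. Your decomposition $\mathbf{p}^2=\boldsymbol{\Pi}^2-2\mathbf{A}\cdot\mathbf{p}-\mathbf{A}^2$ instead exploits $[\mathbf{p}^2,\Hf]=0$. No commutator enters, and the step becomes a plain $\varepsilon$-absorption of $\|\mathbf{p}^2\psi\|$, which is legitimate because this norm is finite on your core. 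This is slightly more elementary than the paper's version.

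One minor point: for the annihilation part of $[\mathbf{A},\Hf]$, the bound \eqref{single-bound} needs $\rho\in L^2$, not just $\sqrt{|\mathbf{k}|}\rho\in L^2$. This also follows from \eqref{UV-condition}, so nothing breaks.
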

\begin{proof}
We follow the argument in the proof of Lemma 12 in \cite{HH08}. Since $H$ is $H_0$-bounded, it suffices to prove the bound $\| H_0 \psi \| \leq C (\| H \psi \| + \| \psi \|)$ on a core of $H_0$ such as the algebraic tensor product of test functions with finite particle vectors in the domain of $\Hf$. We proceed in two steps: \newline 
\medskip\noindent
\underline{\emph{Step 1}}.  $\| H_0 \psi \| \leq C (\| (\mathbf{p} + \mathbf{A})^2 \psi \| + \| \Hf \psi \| + \| \psi \|)$.

Since $\mathbf{p}^2 = (\mathbf{p} + \mathbf{A})^2 - 2 \mathbf{A} \cdot (\mathbf{p} + \mathbf{A}) + \mathbf{A}^2$ and $\mathbf{A}^2$ is $\Hf$-bounded, we need to show that 
\begin{align*}
	\| \mathbf{A} \cdot (\mathbf{p} + \mathbf{A}) \psi \| \leq C (\| (\mathbf{p} + \mathbf{A})^2 \psi \| + \| \Hf \psi \| + \| \psi \|).
\end{align*}
Indeed, $\| \mathbf{A} \cdot (\mathbf{p} + \mathbf{A}) \psi \| \leq C \| (\Hf + 1)^{1/2} (\mathbf{p} + \mathbf{A}) \psi \|$ and 
\begin{align*}
	 (\mathbf{p} + \mathbf{A}) (\Hf + 1) (\mathbf{p} + \mathbf{A}) &= (\mathbf{p} + \mathbf{A}) [\Hf, \mathbf{A}] + (\mathbf{p} + \mathbf{A})^2 (\Hf + 1) \\ 
	 &\leq C([(\mathbf{p} + \mathbf{A})^2]^2 + \Hf^2 + 1)
\end{align*}
where we used that $ [\Hf, \mathbf{A}]$ is $\Hf^{1/2}$-bounded by \Cref{standard-bounds}. \newline
\medskip\noindent
\underline{\emph{Step 2}}. $\| (\mathbf{p} + \mathbf{A})^2 \psi \| + \| \Hf \psi \| \leq C (\| H \psi \| + \| \psi \|)$. 

In form sense, we have that 
\begin{align} \label{algebra}
	H^2 &= [(\mathbf{p} + \mathbf{A})^2]^2 + \Hf^2 + (\mathbf{p} + \mathbf{A})^2 \Hf + \Hf (\mathbf{p} + \mathbf{A})^2 \nonumber \\ 
	&= [(\mathbf{p} + \mathbf{A})^2]^2 + \Hf^2 + 2 (\mathbf{p} + \mathbf{A}) \Hf (\mathbf{p} + \mathbf{A}) + [\mathbf{p} + \mathbf{A}, [\mathbf{p} + \mathbf{A}, \Hf]].
\end{align}
From $[\mathbf{p}, \mathbf{A}] = 0$, $[\mathbf{p}, \Hf] = \boldsymbol{0}$ and the CCR, the double commutator evaluates to 
\begin{align} \label{matrix-element}
	[\mathbf{p} + \mathbf{A}, [\mathbf{p} + \mathbf{A}, \Hf]] = -4 N \int | \rho(\mathbf{k}) |^2 \dd^3k > -\infty.
\end{align}
Step 2 follows from \eqref{algebra}, \eqref{matrix-element} and $\Hf \geq 0$.
\end{proof}

\begin{lemma}[\hspace{-0.1mm}\cite{KMS13}] \label{ess-sa}
The operator $H$ is essentially self-adjoint on any core of $H_0$. 
\end{lemma}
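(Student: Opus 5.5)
We will apply Nelson's commutator theorem with the comparison operator $N_0 = H_0 + c$, where $c \geq 1$ is a constant large enough that $N_0 \geq 1$. It suffices to treat $D(H_0)$ itself as the core and then pass to an arbitrary core $D$ of $H_0$. If $D$ is a core for $H_0$, then $H$ is $H_0$-bounded by \eqref{decomposed} and the discussion following it. Hence the closure of $H|_D$ contains $H|_{D(H_0)}$, so essential self-adjointness on $D(H_0)$ transfers to $D$. Nelson's theorem also lets us check the required bounds on any core of $N_0$, for instance the algebraic tensor product $\mathcal{D}$ of test functions with finite-particle vectors used in \Cref{graph-norm-equivalence}. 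It then yields essential self-adjointness of $H$ on every core of $N_0$.

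\emph{First hypothesis.} We need $\|H\psi\| \leq C\|N_0\psi\|$. This is exactly the $H_0$-boundedness of $H$ already established after \eqref{decomposed}: $\mathbf{A}^2$ is $\Hf$-bounded, and $\mathbf{A}\cdot\mathbf{p}$ is bounded by $\|(\Hf+1)^{1/2}|\mathbf{p}|\psi\| \leq C\|N_0\psi\|$.

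\emph{Second hypothesis.} We need the commutator bound $|\langle H\psi, N_0\psi\rangle - \langle N_0\psi, H\psi\rangle| \leq C\|N_0^{1/2}\psi\|^2$ for $\psi \in \mathcal{D}$. Since $[H, H_0] = [2\mathbf{A}\cdot\mathbf{p} + \mathbf{A}^2, \mathbf{p}^2 + \Hf]$, we compute four pieces using Coulomb gauge, the identity $[\mathbf{p},\mathbf{A}]=0$ in the sense that $\mathbf{p}\cdot\mathbf{A} = \mathbf{A}\cdot\mathbf{p}$, and the CCR:
\begin{enumerate}
\item $[\mathbf{A}\cdot\mathbf{p}, \mathbf{p}^2] = [\mathbf{A},\mathbf{p}^2]\cdot\mathbf{p}$, which involves $\nabla\mathbf{A}$, i.e.\ the field with form factor $\mathbf{k}\rho$, times $\mathbf{p}$ twice. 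This piece is the dangerous one.
\item $[\mathbf{A}\cdot\mathbf{p},\Hf] = -[\Hf,\mathbf{A}]\cdot\mathbf{p}$. Here $[\Hf,\mathbf{A}]$ is the field with form factor $|\mathbf{k}|\rho$, which is $\Hf^{1/2}$-bounded by \Cref{standard-bounds} using $\int |\mathbf{k}||\rho|^2 < \infty$. In form sense this gives a bound $\leq C\langle\psi,(\Hf+1)\psi\rangle^{1/2}\langle\psi,\mathbf{p}^2\psi\rangle^{1/2}$, which is fine.
\item $[\mathbf{A}^2,\Hf]$, which is bounded by $\Hf+1$ in form sense, similarly.
\item $[\mathbf{A}^2,\mathbf{p}^2]$, which involves $\nabla(\mathbf{A}^2)$ and $\mathbf{p}$.
\end{enumerate}

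\emph{Main obstacle.} The terms with $\mathbf{p}^2$ produce $\mathbf{A}$-type fields with form factor $\mathbf{k}\rho/\sqrt{|\mathbf{k}|}$. These are controlled by $(\Hf+1)^{1/2}$ only if $\int |\mathbf{k}|\,|\rho|^2$ is finite, which is part of \eqref{UV-condition}. Indeed $\|a(f)\varphi\| \leq \|f/\sqrt{\omega}\|\,\|\Hf^{1/2}\varphi\|$, and here $|f|^2/\omega = |\mathbf{k}||\rho|^2/|\mathbf{k}|\cdot$const. So in form sense
\[
|\langle\psi,[\mathbf{A}\cdot\mathbf{p},\mathbf{p}^2]\psi\rangle| \leq C\,\|(\Hf+1)^{1/2}\mathbf{p}\psi\|\,\|\mathbf{p}\psi\|,
\]
and this is \emph{not} bounded by $\langle\psi,N_0\psi\rangle$ directly. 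To handle it, I would symmetrize: write the term as $\langle \mathbf{p}\psi, (\partial\mathbf{A})\,\mathbf{p}\psi\rangle$ and use that $\partial\mathbf{A}$ is a field operator linear in $a,a^*$. I would then distribute half of $\Hf$ onto each side through the pull-through bound $\|a(f)\varphi\| \leq C\|\Hf^{1/2}\varphi\|$, splitting the creation part and the annihilation part. This reduces the term to $\|\mathbf{p}\psi\|\cdot\|\Hf^{1/2}\mathbf{p}\psi\|$, which still fails as it stands.

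If this direct approach fails, I would switch to the known route of \cite{KMS13}: take the comparison operator $N_0 = (\mathbf{p}+\mathbf{A})^2$-free variant, namely $N = \mathbf{p}^2 + \Hf + 1$, but compute $[H,N]$ using $[H,\Hf]$ and $[H,\mathbf{p}^2] = [H, H - 2\mathbf{A}\cdot\mathbf{p} - \mathbf{A}^2 - \Hf]$. This expresses the troublesome terms through $[H,\Hf]$ plus commutators of $H$ with $\mathbf{A}\cdot\mathbf{p}$. Alternatively, I would work with $N = H + \Hf + c$, whose form domain is that of $H_0$ by \Cref{graph-norm-equivalence}. For this choice $[H,N] = [H,\Hf] = [(\mathbf{p}+\mathbf{A})^2,\Hf]$, which equals
\[
(\mathbf{p}+\mathbf{A})\cdot[\mathbf{A},\Hf] + [\mathbf{A},\Hf]\cdot(\mathbf{p}+\mathbf{A}).
\]
This is form-bounded by $(\mathbf{p}+\mathbf{A})^2 + \Hf + 1 \leq C N$ via \Cref{standard-bounds}, and this is the version I expect to go through cleanly.

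The first Nelson hypothesis for this $N$ follows from \Cref{graph-norm-equivalence}, since $N$ is then self-adjoint on $D(H_0)$ and cores agree. Self-adjointness of $N$ on $D(H_0)$ would itself follow by Kato--Rellich, because $\Hf$ is infinitesimally bounded relative to $H+\Hf$.
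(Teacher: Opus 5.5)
\textbf{Gap: the comparison operator is never shown to be self-adjoint.} Your final comparison operator $N = H + \Hf + c = (\mathbf{p}+\mathbf{A})^2 + 2\Hf + c$ is the paper's $N_\lambda = (\mathbf{p}+\mathbf{A})^2 + \lambda\Hf + 1$ with $\lambda = 2$. Your commutator computation $[H,N] = [(\mathbf{p}+\mathbf{A})^2,\Hf]$, form-bounded by $CN$ via \Cref{standard-bounds}, is correct. The problem is the step you treat as an afterthought: Nelson's theorem requires $N$ to be self-adjoint, and your Kato--Rellich justification does not deliver this.

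Kato--Rellich needs a self-adjoint unperturbed operator. If that operator is $H$, with $\Hf$ as the perturbation, you are presupposing self-adjointness of $H$ on $D(H_0)$. That is exactly what this lemma together with \Cref{graph-norm-equivalence} is supposed to prove. Even granting it, $\Hf$ is not infinitesimally $H$-bounded. On one-photon states whose photon momentum tends to infinity, $\|(H-\Hf)\psi\| = \|(\mathbf{p}+\mathbf{A})^2\psi\|$ stays bounded while $\|\Hf\psi\|\to\infty$, so the relative bound is not below $1$.

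The only non-circular option is to view $N$ as $\mathbf{p}^2 + 2\Hf + c$ perturbed by $2\mathbf{A}\cdot\mathbf{p}+\mathbf{A}^2$. But that relative bound grows with the coupling absorbed in $\rho$ and need not be $<1$. This is precisely the small-coupling Kato--Rellich argument that \Cref{main-result} is meant to go beyond. With the coefficient of $\Hf$ frozen at $2$, self-adjointness of $N$ is no easier than self-adjointness of $H$ itself.

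\textbf{Missing idea: keep the coefficient of $\Hf$ free and large.} Set $N_{0,\lambda} = \mathbf{p}^2 + \lambda\Hf + 1$. Since everything commutes, $2\lambda\,\Hf\,\mathbf{p}^2 \le N_{0,\lambda}^2$ and $\lambda^2\Hf^2 \le N_{0,\lambda}^2$. By \Cref{standard-bounds}, the perturbation $2\mathbf{A}\cdot\mathbf{p}+\mathbf{A}^2$ therefore has $N_{0,\lambda}$-bound $O(\lambda^{-1/2}+\lambda^{-1})$. This is $<1$ for $\lambda$ large, no matter how large $\rho$ is. Thus $N_\lambda \geq 1$ is self-adjoint on $D(N_{0,\lambda}) = D(H_0)$, with graph norm equivalent to that of $H_0$. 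This also gives the first Nelson hypothesis $\|H\psi\|\le C\|N_\lambda\psi\|$. Nothing is lost on the commutator side: $[H,N_\lambda] = (\lambda-1)[(\mathbf{p}+\mathbf{A})^2,\Hf]$ is bounded by $C_\lambda N_\lambda$ exactly as in your computation.
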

\begin{proof}
We follow the argument in the proof of Lemma 3.1 in \cite{KMS13}.
We introduce the auxiliary operator $N_\lambda = (\mathbf{p} + \mathbf{A})^2 + \lambda \Hf + 1$. By Kato-Rellich, if $\lambda \gg 1$, $N_\lambda$ is self-adjoint on the domain of $N_{0, \lambda} = \mathbf{p}^2 + \lambda \Hf + 1$. Indeed, by the arguments following \eqref{decomposed}, $2 \mathbf{A} \cdot \mathbf{p} + \mathbf{A}^2$ is $N_{0, \lambda}$-bounded with relative bound $O(\lambda^{-1/2} + \lambda^{-1})$. 
We apply Nelson's commutator theorem\footnote{See Theorem X.37 in \cite{RS75}.}: First, in form sense,
\begin{align*}
	\pm \iu [H, N_\lambda] &= \pm (\lambda - 1) \big( (\mathbf{p} + \mathbf{A}) \cdot \iu [\mathbf{A}, \Hf]  + \iu [\mathbf{A}, \Hf] \cdot (\mathbf{p} + \mathbf{A})  \big) \\ 
	& \leq C_\lambda N_\lambda
\end{align*}
where we used that $[\mathbf{A}, \Hf]$ is $\Hf^{1/2}$-bounded by \Cref{standard-bounds}. Second, $H$ is $N_\lambda$-bounded because $H$ is $H_0$-bounded and the graph norms of $H_0$, $N_{0, \lambda}$ and $N_\lambda$ are equivalent. 
Hence $H$ is essentially self-adjoint on any core of $N_{\lambda}$ (or $H_0$).
\end{proof}

\begin{proof}[Proof of \Cref{main-result}]
From the graph norm equivalence of $H$ and $H_0$, see \Cref{graph-norm-equivalence}, it follows that $H$ is a closed operator on the domain of $H_0$. By \Cref{ess-sa}, $H$ is also essentially self-adjoint, so $H$ is self-adjoint. 
\Cref{main-result} now follows from the Kato-Rellich theorem since $\boldsymbol{\sigma} \cdot \mathbf{B} + V$ is infinitesimally $H_0$- hence $H$-bounded. 
(More precisely, $V$ is infinitesimally $\Delta$-bounded and, by \Cref{standard-bounds}, $\boldsymbol{\sigma} \cdot \mathbf{B} $ is $\Hf^{1/2}$-bounded hence infinitesimally $\Hf$-bounded.)
\end{proof}

\noindent
\emph{Acknowledgement.}
This work is supported by the German Research Foundation (DFG) under Grant GR 3213/4-1. The author thanks Marcel Griesemer for helpful comments on the manuscript. 

\appendix 

\setcounter{lemma}{0}
\renewcommand{\thelemma}{\Alph{lemma}}

\section{Appendix}\label{sec:appendix}

\begin{lemma} \label{standard-bounds}
With the notation from \Cref{sec:proof}, we have that 
\begin{enumerate}
\item[(i)] $\mathbf{A}$ is $\Hf^{1/2}$-bounded and $\mathbf{A}^2$ is $\Hf$-bounded.
\item[(ii)] $[\Hf, \mathbf{A}]$ and $\boldsymbol{\sigma} \cdot \mathbf{B}$ are $\Hf^{1/2}$-bounded.
\end{enumerate}
\end{lemma}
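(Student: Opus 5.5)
The plan is to reduce everything to the standard estimates for creation and annihilation operators: for $f \in \hn$ with $\int |f|^2/|\mathbf{k}| < \infty$,
\begin{align*}
\| a(f) \psi \| \leq \| f/\sqrt{|\mathbf{k}|} \| \, \| \Hf^{1/2} \psi \|, \qquad \| a^*(f) \psi \| \leq \| f/\sqrt{|\mathbf{k}|} \| \, \| \Hf^{1/2} \psi \| + \| f \| \, \| \psi \|.
\end{align*}
Since $\mathbf{x}_j$ only enters through the phase $e^{\pm \iu \mathbf{k}\cdot \mathbf{x}_j}$, these bounds hold fiberwise in $\mathbf{x}$ with constants independent of $\mathbf{x}$. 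Integrating over the electron variables then yields the same bounds on $\HH_{\mathrm{el}+\mathrm{f}}$. Each component of $\mathbf{A}(\mathbf{x}_j)$ has the form $a(f_{\mathbf{x}}) + a^*(f_{\mathbf{x}})$ with $f_{\mathbf{x}}(\mathbf{k},\lambda) = e^{-\iu \mathbf{k}\cdot\mathbf{x}} \rho(\mathbf{k}) \boldsymbol{\eps}_\lambda(\mathbf{k}) / \sqrt{|\mathbf{k}|}$. Hence $\|f\|^2 \lesssim \int |\rho|^2/|\mathbf{k}|$ and $\|f/\sqrt{|\mathbf{k}|}\|^2 \lesssim \int |\rho|^2/|\mathbf{k}|^2$. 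Both are finite by \eqref{UV-condition}, using $1/|\mathbf{k}| \leq 1/|\mathbf{k}|^2 + |\mathbf{k}|$. This gives the first part of (i).

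For $\mathbf{A}^2$, I would use the pull-through type estimate $\|a^\#(f) \psi\| \leq C_f \|(\Hf+1)^{1/2}\psi\|$ iterated. The issue is that $a^\#(g)$ does not commute with $\Hf$, so I use $\|(\Hf+1)^{1/2} a^\#(f)\psi\| \leq C \|(\Hf+1)\psi\|$. This follows from the commutator $[\Hf, a^*(f)] = a^*(|\mathbf{k}| f)$, together with
\begin{align*}
\| (\Hf+1)^{1/2} a^*(f) \psi \|^2 = \langle \psi, a(f) (\Hf+1) a^*(f) \psi \rangle = \langle \psi, a(f) a^*(f)(\Hf+1)\psi\rangle + \langle \psi, a(f) a^*(|\mathbf{k}|f)\psi\rangle .
\end{align*}
The right-hand side is bounded by $C\|(\Hf+1)\psi\|\,\|\psi\| + C\|(\Hf+1)^{1/2}\psi\|^2$, since $\| |\mathbf{k}| f \|^2 \lesssim \int |\mathbf{k}| |\rho|^2 < \infty$. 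This is exactly where the $|\mathbf{k}|$ weight in \eqref{UV-condition} enters. The annihilation case is analogous: use $a(f)\Hf = (\Hf + |\mathbf{k}|)a(f)$ in the pull-through sense, or simply bound $\|(\Hf+1)^{1/2}a(f)\psi\|^2 \leq \langle \psi, a^*(f)a(f)(\Hf+1)\psi\rangle$ since $[\Hf, a(f)] = -a(|\mathbf{k}|f)$ has a favorable sign. Then $\|\mathbf{A}^2\psi\| \leq C\|(\Hf+1)^{1/2}\mathbf{A}\psi\| \leq C\|(\Hf+1)\psi\|$, which gives $\Hf$-boundedness in the sense used in the paper, that is, relative to $\Hf+1$.

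For (ii), formally $[\Hf, \mathbf{A}(\mathbf{x})] = a^*(|\mathbf{k}|f_{\mathbf{x}}) - a(|\mathbf{k}| f_{\mathbf{x}})$ by the CCR. The relevant norms are $\| |\mathbf{k}|f\|^2 \lesssim \int |\mathbf{k}||\rho|^2$ and $\| |\mathbf{k}|^{1/2} f\|^2 \lesssim \int |\rho|^2$, both finite, so the estimates above give $\Hf^{1/2}$-boundedness. Similarly, $\mathbf{B}(\mathbf{x}) = \mathrm{rot}\,\mathbf{A}(\mathbf{x})$ replaces $\boldsymbol{\eps}_\lambda$ by $\mp\iu\,\mathbf{k}\times\boldsymbol{\eps}_\lambda$, i.e.\ it multiplies $f$ by a factor of size $|\mathbf{k}|$, which yields the same norms. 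The Pauli matrices are bounded, so $\boldsymbol{\sigma}\cdot\mathbf{B}$ is $\Hf^{1/2}$-bounded.

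The only delicate points are justifying these commutator identities on a core (finite particle vectors with test-function components suffice, then closure) and the $\mathbf{x}$-uniformity in the fiber decomposition. I expect the $\mathbf{A}^2$ bound to be the main step; everything else is routine bookkeeping with \eqref{UV-condition}.
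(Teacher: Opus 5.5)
Your treatment of the first part of (i) and of (ii) is essentially the paper's: the single-operator bounds, $[\Hf,a^\#(f)]=\pm a^\#(|\mathbf{k}|f)$, and $|\mathbf{k}\times\boldsymbol{\eps}_\lambda|=|\mathbf{k}|$ for $\mathbf{B}$. The $\mathbf{A}^2$ step, however, contains a false estimate.

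\textbf{The gap.} Write $Y^2:=\|(\Hf+1)^{1/2}a^*(f)\psi\|^2$. You claim the right-hand side of your identity for $Y^2$ is bounded by $C\|(\Hf+1)\psi\|\,\|\psi\|+C\|(\Hf+1)^{1/2}\psi\|^2$. This cannot hold. The first term equals $\langle a(f)a^*(f)\psi,(\Hf+1)\psi\rangle$, and $a(f)a^*(f)=a^*(f)a(f)+\|f\|^2$ is unbounded. Concretely, take $\psi_n=(n!)^{-1/2}a^*(g)^n\Omega$ with $\|g\|=1$, $g$ supported in a shell $\{r\le|\mathbf{k}|\le 2r\}$ and $\langle f,g\rangle\neq0$. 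Then $\phi=a^*(f)\psi_n$ satisfies $\langle\phi,\Hf\phi\rangle\ge nr\|\phi\|^2$. Hence $Y^2\ge nr\,(n|\langle f,g\rangle|^2+\|f\|^2)\ge rn^2|\langle f,g\rangle|^2$, whereas your bound is $O(n)$.

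Your ``favorable sign'' shortcut for $a(f)$ is also wrong. The cross term $\langle a(f)\psi,a(|\mathbf{k}|f)\psi\rangle$ is not sign-definite: in the one-photon sector it equals $\overline{\langle f,\psi_1\rangle}\langle|\mathbf{k}|f,\psi_1\rangle$, which can be negative.

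\textbf{The fix.} Both steps are repaired by absorption. First, $|\langle a(f)a^*(f)\psi,(\Hf+1)\psi\rangle|\le\|f/\sqrt{|\mathbf{k}|}\|\,Y\,\|(\Hf+1)\psi\|$. Second, the commutator term is at most $C\|(\Hf+1)^{1/2}\psi\|^2\le C\|(\Hf+1)\psi\|^2$ by Cauchy--Schwarz. So $Y^2\le CY\|(\Hf+1)\psi\|+C\|(\Hf+1)\psi\|^2$, which gives $Y\le C'\|(\Hf+1)\psi\|$; this is legitimate since $Y<\infty$ on your core. The same argument, using $\|a^*(f)a(f)\psi\|\le C\|(\Hf+1)^{1/2}a(f)\psi\|$ and Cauchy--Schwarz on the cross term, handles $a(f)$.

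\textbf{Comparison with the paper.} With this fix your route is valid, but it differs from the paper's. For (i) the paper never commutes $\Hf$ through $a^\#$. It expands $(a(f)+a^*(f))^2$ and uses the CCR to reduce everything to $a(f)a(f)$, $a(f)$ and multiples of the identity, for example
$\|a^*(f)a^*(f)\psi\|^2=\|a(f)a(f)\psi\|^2+4\|f\|^2\|a(f)\psi\|^2+2\|f\|^4\|\psi\|^2$ and $\|a^*(f)a(f)\psi\|^2=\|a(f)a(f)\psi\|^2+\|f\|^2\|a(f)\psi\|^2$.
It then applies the direct bound $\|a(f)a(f)\psi\|\le\|f/\sqrt{|\mathbf{k}|}\|^2\|\Hf\psi\|$, obtained by H\"older in the $n$-photon representation. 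That route is shorter and needs no absorption or a priori finiteness. It also uses only $\int(|\mathbf{k}|^{-1}+|\mathbf{k}|^{-2})|\rho|^2<\infty$, so the $|\mathbf{k}|$-weight in \eqref{UV-condition} is genuinely needed only for (ii); in your argument it enters (i) as an artifact of the commutator method. What your approach buys in exchange is the reusable mapping bound $\|(\Hf+1)^{1/2}a^\#(f)\psi\|\le C\|(\Hf+1)\psi\|$.
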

\begin{proof}
For $G \in L^2(\R^3 \times \{1, 2 \})$ let $a(G) = \sum_{\lambda } \int \hspace{-1mm}\dd^3k \,  \overline{G(\mathbf{k}, \lambda)} a_\lambda(\mathbf{k})$ in $\HHf$.
If $G(\mathbf{k}, \lambda)/\sqrt{|\mathbf{k}|} \in L^2(\R^3 \times \{ 1, 2 \})$, then, by Hölder's inequality, 
 \begin{align}
 	\| a(G) \psi \|&\leq \left( \sum_\lambda \int \hspace{-1mm} \frac{|G(\mathbf{k}, \lambda)|^2}{|\mathbf{k}|} \dd^3k \right)^{1/2} \| \Hf^{1/2} \psi \|, \label{single-bound}\\ 
	\| a(G) a(G) \psi \|&\leq \left( \sum_\lambda \int \hspace{-1mm} \frac{|G(\mathbf{k}, \lambda)|^2}{|\mathbf{k}|} \dd^3k \right) \| \Hf \psi \|. \label{double-bound}
 \end{align}
 From \eqref{single-bound}, \eqref{double-bound}, and the CCR, (i) follows. (ii) follows from \eqref{single-bound} and the CCR.
\end{proof}


\end{document}